\outer\long\def\COUIC#1{}
\def\limp{\Rightarrow}
\def\liff{\Leftrightarrow}
\def\exelim{\delta_{\ex}}
\def\botelim{\delta_{\bot}}
\def\real{\Vdash}
\def\<{\langle}
\def\>{\rangle}
\def\ex{\exists}
\def\la{\leftarrow}
\def\T{\mathcal{T}}
\def\U{\mathcal{U}}
\def\L{\mathcal{L}}
\def\S{\mathcal{S}}
\def\Int#1{\llbracket#1\rrbracket}
\def\M{\mathcal{M}}
\def\C{\mathcal{C}}
\def\CR{\mathit{CR}}
\def\Elim{\mathit{Elim}}
\def\SN{\mathit{SN}}
\def\Code#1{\lceil#1\rceil}
\def\Edoc#1{\lfloor#1\rfloor}
\def\Term{\mathit{Term}}
\def\Nat{\mathit{Nat}}
\def\Le{\mathit{Le}}
\def\Sort{\mathit{Sort}}
\def\Proof{\mathit{Proof}}
\def\ProofVar{\mathit{ProofVar}}
\def\TermVar{\mathit{TermVar}}
\def\TSubst{\mathit{TSubst}}
\def\PSubst{\mathit{PSubst}}
\def\Red{\mathit{Red}}
\def\Redn{\mathit{Redn}}
\def\mathconstr#1{\mathit{#1}}
\def\Axiom{\mathconstr{Axiom}}
\def\BotE{\mathconstr{Bot\_E}}
\def\TopI{\mathconstr{Top\_I}}
\def\ImpI{\mathconstr{Imp\_I}}
\def\ImpE{\mathconstr{Imp\_E}}
\def\AndI{\mathconstr{And\_I}}
\def\AndEa{\mathconstr{And\_E}_1}
\def\AndEb{\mathconstr{And\_E}_2}
\def\OrIa{\mathconstr{Or\_I}_1}
\def\OrIb{\mathconstr{Or\_I}_2}
\def\OrE{\mathconstr{Or\_E}}
\def\ForallI{\mathconstr{Forall\_I}}
\def\ForallE{\mathconstr{Forall\_E}}
\def\ExistsI{\mathconstr{Exists\_I}}
\def\ExistsE{\mathconstr{Exists\_E}}
\def\fst{\mathit{fst}}
\def\snd{\mathit{snd}}
\begin{document}

\title{Relative normalization}
\author{Gilles Dowek\inst{1} \and Alexandre Miquel\inst{2}}
\institute{
  {\'E}cole polytechnique and INRIA\\
  LIX, {\'E}cole polytechnique, 91128 Palaiseau cedex, France \\
  \email{Gilles.Dowek@polytechnique.fr} \and
  Universit{\'e} Paris~7,\\
  PPS, 175 Rue du Chevaleret, 75013 Paris, France \\
  \email{Alexandre.Miquel@pps.jussieu.fr} }
\pagestyle{headings} 

\maketitle


G{\"o}del's second incompleteness theorem forbids to prove, in a given
theory~$\U$, the consistency of many theories---in particular, of the
theory~$\U$ itself---as well as it forbids to prove the normalization
property for these theories, since this property implies their
consistency.
When we cannot prove in a theory~$\U$ the consistency of a
theory~$\T$, we can try to prove a relative consistency theorem,
that is, a theorem of the form:
\begin{quote}
  \emph{If~$\U$ is consistent then~$\T$ is consistent}.
\end{quote}
Following the same spirit, we show in this paper how to prove relative
normalization theorems, that is, theorems of the form:
\begin{quote}
  \emph{If $\U$ is $1$-consistent, then~$\T$ has the
    normalization property.}
\end{quote}

\section{An abstract consistency result}
\label{s:AbstrConsistency}

To prove that a theory~$\T$ is consistent provided the theory~$\U$ is,
we usually assume given a model of~$\U$ and we build a model of~$\T$.
When the domain of the model of~$\T$ is a subset of the domain of the
model of~$\U$---which is called an internal model---we can however
proceed in a slightly different way:
instead of mapping a formula~$A$ of~$\T$ to a truth value
$\Int{A}_{\phi}$ (depending on an assignment~$\phi$)
we can translate it into a formula $A^*$ of~$\U$ that expresses that
the truth value associated to~$A$ is~$1$.
And instead of proving that $\T\vdash A$ entails $\Int{A}_{\phi}=1$,
we prove that $\T\vdash A$ entails $\U\vdash A^*$.
Finally, if~$\bot^*$ is $\bot$ (or any equivalent formula in~$\U$) and
if the theory~$\U$ is consistent, then the theory~$\T$ is consistent
too.

Unlike the method based on model extrusion, the method by which
formul{\ae} of~$\T$ are directly translated as formul{\ae} of~$\U$
does not require to take care of the free variables of~$A$ with an
assignment: free variables of~$A$ just remain free variables of~$A^*$,
after being possibly renamed.

This way of proving the consistency of~$\T$ is quite different from
proving in~$\U$ the existence of a model of~$\T$, as we do not need to
define a domain of interpretation: all the universe of discourse of
the theory~$\U$---or part of it---can serve as domain.
Thus this method can be applied also in cases where the
existence of a model of~$\T$ can not be proved in~$\U$.

\begin{definition}\label{d:Interp}
  --- An \emph{interpretation} of a theory~$\T$ in a theory~$\U$ is
  given by
  \begin{itemize}
  \item a function which maps each sort~$s$ of~$\T$ to a sort~$s_*$
    of~$\U$ with a relativization predicate~$s^*(x)$ in~$\U$ (with~$x$
    of sort~$s_*$), such that $\U\vdash\exists x~s^*(x)$.
  \item a function which maps each variable~$x$ of sort~$s$ in~$\T$ to
    a variable~$x^*$ of sort~$s_*$ in~$\U$;
  \item a function which maps each formula $A$ of~$\T$ with free
    variables $x_1,\ldots,x_n$ of sorts $s_1,\ldots,s_n$ in~$\T$
    to a formula $A^*$ of~$\U$ whose free variables occur among
    the variables $x_1^*,\ldots,x_n^*$
  \end{itemize}
  for which we require that
  \begin{enumerate}
  \item for all sorts~$s$ of~$\T$ we have
    $\U\vdash\exists x~s^*(x)$;
  \item for all formul{\ae}~$A$ and~$B$ of~$\T$ whose free
    variables occur among the variables $x_1,\ldots,x_n$ of sorts
    $s_1,\ldots,s_n$ in~$\T$ we have:
    $$\begin{array}{l}
      U\vdash \bot^*\liff\bot \\
      U\vdash \top^*\liff\top \\
      U\vdash s_1^*(x^*_1)\land\cdots\land s_n^*(x^*_n)~\limp~
      \bigl((A\land B)^*\liff(A^*\land B^*)\bigr) \\
      U\vdash s_1^*(x^*_1)\land\cdots\land s_n^*(x^*_n)~\limp~
      \bigl((A\lor B)^*\liff(A^*\lor B^*)\bigr) \\
      U\vdash s_1^*(x^*_1)\land\cdots\land s_n^*(x^*_n)~\limp~
      \bigl((A\limp B)^*\liff(A^*\limp B^*)\bigr) \\
      U\vdash s_1^*(x^*_1)\land\cdots\land s_{n-1}^*(x^*_{n-1})~\limp~
      \bigl((\forall x_n\,A)^*\liff
      \forall x^*_n\,(s_n^*(x^*_n)\limp A^*)\bigr) \\
      U\vdash s_1^*(x^*_1)\land\cdots\land s_{n-1}^*(x^*_{n-1})~\limp~
      \bigl((\exists x_n\,A)^*\liff
      \exists x^*_n\,(s_n^*(x^*_n)\land A^*)\bigr) \\
    \end{array}$$
  \item for all axioms~$A$ of~$\T$ we have:\quad $\U\vdash A^*$\quad
    (assuming that the axioms of~$\T$ are closed formul{\ae}).
  \end{enumerate}
\end{definition}

A simple way to define the underlying translation $A\mapsto A^*$ of
such an interpretation is to define it structurally on formul{\ae},
by first defining the formula $A^*$ for each atomic formula~$A$
of~$\T$, and then by extending the definition inductively to all
formul{\ae} using the equations
$$\begin{array}{r@{~~}c@{~~}l@{\qquad\qquad}r@{~~}c@{~~}l}
  (A\limp B)^* &\equiv& (A^*\limp B^*) &
  \multicolumn{3}{l}
  {\top^*~\equiv~\top\qquad\quad\bot^*~\equiv~\bot} \\
  (A\land B)^* &\equiv& (A^*\land B^*) &
  (\forall x\,A)^* &\equiv& \forall x^*\,(s^*(x^*)\limp A^*) \\
  (A\lor B)^* &\equiv& (A^*\lor B^*) &
  (\exists x\,A)^* &\equiv& \exists x^*\,(s^*(x^*)\land A^*) \\
\end{array}$$
(assuming that~$s$ is the sort of the variable~$x$).
In this case, the conditions required in item~1 of the definition
above come for free, but the soundness of all the axioms of~$\T$
in~$\U$ still has to be checked separately.

Notice that the definition of the notion of interpretation (of a
theory into another one) does not say anything about the translation
of terms.
The reason is that in some cases, it is desirable to interpret a
theory with a rich term language (such as Peano arithmetic)
into a theory
with a poor one (such as set theory), so that we can not expect that
terms of~$\T$ are always interpreted as terms of~$\U$.

However, interpretations are usually structural on terms too, which
means that they come with a translation $t\mapsto t^*$ on terms such
that
\begin{itemize}
\item for all terms $t$ of sort~$s$ whose free variables occur among
  the variables  $x_1,\ldots,x_n$ of sorts $s_1,\ldots,s_n$ in the
  theory~$\T$, one has:
  $$\U\vdash s^*_1(x^*_1)\land\cdots\land s^*_n(x^*_n)
  ~\limp~s^*(t^*)$$
\item for all well-formed terms of~$\T$ of the form
  $f(t_1,\ldots,t_n)$ (where~$f$ is an arbitrary function symbol
  of~$\T$), one has:\footnote{This condition says that each function
    symbol~$f$ of~$\T$ is translated as a macro in~$\U$.}
  $$(f(t_1,\ldots,t_n))^*=
  (f(x_1,\ldots,x_n))^*\{x^*_1:=t^*_1;\ldots;x^*_n:=t^*_n\}\,.$$
\end{itemize}

The main interest of the notion of interpretation (in the
sense of Def.~\ref{d:Interp}) is that it provides a way to translate
each theorem of~$\T$ into a theorem of~$\U$:
\begin{proposition}\label{interptheo}
  --- Given an interpretation $*$ of a theory~$\T$ in a theory~$\U$ and
  a formula~$A$ of~$\T$ whose free variables occur among the variables
  $x_1,\ldots,x_n$ of sorts $s_1,\ldots,s_n$ in~$\T$,
  if $\T\vdash A$ then 
  $$\U\vdash s_1^*(x^*_1) \land ... \land s_n^*(x^*_n) \limp A^*$$
\end{proposition}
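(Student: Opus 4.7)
The plan is to prove, by induction on the derivation of $\T\vdash A$, the more general claim that for every derivation of $\Gamma\vdash_{\T} A$ whose free variables occur among $x_1,\ldots,x_n$ of sorts $s_1,\ldots,s_n$, there is a derivation of
$$s_1^*(x_1^*),\ldots,s_n^*(x_n^*),\Gamma^*\vdash_{\U} A^*,$$
where $\Gamma^*$ denotes the pointwise translation of the context. The statement of Proposition~\ref{interptheo} is then the special case $\Gamma=\emptyset$. The induction is on the last inference rule used.

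The base cases are immediate. For the assumption rule, $A\in\Gamma$ gives $A^*\in\Gamma^*$; for an axiom of~$\T$, condition~3 of Definition~\ref{d:Interp} directly provides $\U\vdash A^*$, which implies the desired sequent by weakening. For the propositional rules ($\top$-intro, $\bot$-elim, and the introduction/elimination rules for $\land,\lor,\limp$), the equivalences supplied by condition~2 allow me to commute the translation past the connective, and the conclusion then follows by applying the same rule in~$\U$. For instance, in the $\limp$-introduction case, the induction hypothesis yields $s_i^*(x_i^*),\Gamma^*,A^*\vdash_{\U} B^*$, so $s_i^*(x_i^*),\Gamma^*\vdash_{\U}A^*\limp B^*$, and the equivalence $(A\limp B)^*\liff(A^*\limp B^*)$ supplied by condition~2 closes the case.

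The hard part will be the quantifier rules. Here I need two ingredients that go beyond the bare Definition~\ref{d:Interp}: first, that $(A\{x:=t\})^*$ and $A^*\{x^*:=t^*\}$ agree (at least provably in~$\U$ under the relativization assumptions), and second, that every term $t$ of sort $s$ translates to an expression $t^*$ satisfying $\U\vdash s_1^*(x_1^*)\land\cdots\land s_n^*(x_n^*)\limp s^*(t^*)$. These are precisely the two conditions spelled out in the discussion of ``structural on terms'' interpretations that follows Definition~\ref{d:Interp}, and I will invoke them explicitly.

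With these in hand, the $\fa$-introduction case reduces to applying the induction hypothesis with the augmented variable list $x_1,\ldots,x_n,x$ and discharging $s^*(x^*)$ to obtain $\fa x^*(s^*(x^*)\limp A^*)$, which is $(\fa x\,A)^*$ by condition~2. The $\fa$-elimination case amounts to instantiating $\fa x^*(s^*(x^*)\limp A^*)$ with $t^*$ and discharging the side condition $s^*(t^*)$ using the term translation property, then rewriting $A^*\{x^*:=t^*\}$ as $(A\{x:=t\})^*$. The $\ex$-introduction and $\ex$-elimination cases are dual and proceed analogously, again using condition~2 to commute $*$ with $\ex$ and the term translation property to produce the relativization witness. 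This completes the induction.
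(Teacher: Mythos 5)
Your proof is correct and follows the same route as the paper, whose entire proof is the single sentence ``By induction on the structure of the proof of $\T\vdash A$''; your generalization to open sequents $s_1^*(x_1^*),\ldots,s_n^*(x_n^*),\Gamma^*\vdash_{\U}A^*$ and the case analysis on the last rule are exactly the standard way to carry that induction out. The one point worth dwelling on is the one you flag yourself: the quantifier cases genuinely require the substitution property $(A\{x:=t\})^*=A^*\{x^*:=t^*\}$ and the relativization property $\U\vdash s_1^*(x_1^*)\land\cdots\land s_n^*(x_n^*)\limp s^*(t^*)$, neither of which is guaranteed by Definition~\ref{d:Interp} itself --- they only hold for interpretations that are structural on terms and formul{\ae}, which the paper presents as the ``usual'' but not the required case. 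So strictly speaking your argument (like the paper's) proves the proposition only under these additional hypotheses; this is a looseness in the paper's statement rather than a defect in your reasoning, and you are right to make the dependence explicit. A minor correction: the substitution property is not literally one of the two displayed conditions in the ``structural on terms'' discussion --- it follows from the macro condition on function symbols combined with structurality of the formula translation --- but invoking it is the right move.
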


\begin{proof}
  By induction on the structure of the proof of $\T\vdash A$.
\end{proof}

\begin{theorem}
  --- If the theory $\T$ has an interpretation in~$\U$, and if the
  theory~$\U$ is consistent, then~$\T$ is consistent too.
\end{theorem}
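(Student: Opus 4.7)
The plan is a straightforward contrapositive argument using Proposition~\ref{interptheo} and the clause $\U\vdash\bot^*\liff\bot$ from Definition~\ref{d:Interp}.

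First, I would assume for contradiction that $\T$ is inconsistent, so that $\T\vdash\bot$. Since $\bot$ is a closed formula (no free variables), applying Proposition~\ref{interptheo} gives $\U\vdash\bot^*$ with no hypotheses on the left of the implication (the list $s_1^*(x_1^*)\land\cdots\land s_n^*(x_n^*)$ is empty and so reduces to $\top$).

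Next, condition~2 of Definition~\ref{d:Interp} provides $\U\vdash\bot^*\liff\bot$, so from $\U\vdash\bot^*$ we conclude $\U\vdash\bot$, contradicting the assumed consistency of~$\U$.

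There is essentially no obstacle here: the work has all been done in Proposition~\ref{interptheo} and in the definition of interpretation. The only small point worth mentioning is that one must check the degenerate case of the quantifier-free hypothesis list in Proposition~\ref{interptheo}; this is handled by noting that $\bot$ is closed, which is the reason we explicitly stated in Definition~\ref{d:Interp} that axioms (and here, the derived formula $\bot$) are treated as closed formul\ae. Without the clause $\U\vdash\exists x\,s^*(x)$ ensuring non-emptiness of the relativized sorts, one might worry about formul\ae\ with free variables, but for $\bot$ the issue does not arise.
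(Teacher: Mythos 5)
Your proof is correct and follows exactly the same route as the paper: assume $\T\vdash\bot$, apply Proposition~\ref{interptheo} (the hypothesis list being empty since $\bot$ is closed) to get $\U\vdash\bot^*$, and conclude $\U\vdash\bot$ via the equivalence $\U\vdash\bot^*\liff\bot$. Your extra remark about the degenerate quantifier-free case is a reasonable clarification but adds nothing essential beyond the paper's argument.
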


\begin{proof}
  Assume $\T\vdash\bot$.
  From Prop.~\ref{interptheo} we get $\U\vdash\bot^*$ and thus
  $\U\vdash\bot$ using the equivalence $\U\vdash\bot^*\liff\bot$.\qed
\end{proof}

Notice that this way of proving the consistency of a theory directly
extends to intuitionistic logic.

\section{Normalization}

\subsection{Deduction modulo}

We want to be able to speak abstractly of the normalization of proofs
in an arbitrary theory~$\T$ and also to be able to deduce some
corollaries from the fact that~$\T$ has the normalization property, in
particular, the consistency of~$\T$, the disjunction and the witness
property for constructive proofs, etc.
It is well-known that predicate logic is not a appropriate for
defining such a notion of normalization as each axiomatic theory~$\T$
requires a specific notion of reduction.
Thus we use an extension of  predicate logic called \emph{Deduction
  modulo} \cite{DHK}.

In Deduction modulo, a theory is formed is formed with a set of axioms
$\Gamma$ and a congruence $\equiv$ defined on formul{\ae}.
Then, the deduction rules take this congruence into account.
For instance, the \emph{modus ponens} is not stated as usual
$$\infer{B}{A\limp B & A}$$
as the first premise need not be exactly $A\limp B$ but may be
only congruent to this formula, hence it is stated
$$\infer[\text{if}~C\equiv A\limp B]{B}{C & A}$$
All the rules of natural deduction may be stated in a similar way.
Many theories, such as arithmetic, simple type theory and set theory
can be expressed with a congruence and no axioms.

Replacing axioms by a congruence changes the structure of
proofs and in particular some theories may have the normalization
property when expressed with axioms and not when expressed with
a congruence. 
In counterpart, when theory formed with a congruence and no axioms 
has the normalization
property, then we can deduce that it is consistent, 
constructive proofs have the disjunction and the witness property, 
various proof search methods are complete, etc.

\subsection{Pre-models}

A theory in Deduction modulo has the normalization property if it
has what we call a pre-model~\cite{DowekWerner}.
A \emph{pre-model} is a many-valued model whose truth values are
reducibility candidates, that is, particular sets of strongly
normalizable proof-terms whose definition is given below.

\begin{definition}[Proof-term]\label{def:ProofTerm}
  \emph{Proof-terms} are inductively defined as follows:
  $$\begin{array}{r@{~~}r@{~~}l@{\qquad}l}
    \pi &::=&\alpha & (\text{Axiom}) \\
    &|& \lambda\alpha~\pi \quad|\quad (\pi~\pi') &
    ({\limp}\text{-intro},~{\limp}\text{-elim})\\
    &|& \langle\pi,\pi'\rangle \quad|\quad
    \fst(\pi) \quad|\quad \snd(\pi) &
    ({\land}\text{-intro},~{\land}\text{-elim}_{1,2}) \\
    &|& i(\pi) \quad|\quad j(\pi) \quad|\quad
    (\delta~\pi_{1}~\alpha \pi_{2}~\beta\pi_{3}) &
    ({\lor}\text{-intro}_{1,2},~{\lor}\text{-elim}) \\
    &|& I & ({\top}\text{-intro}) \\
    &|& (\botelim~\pi) & ({\bot}\text{-elim}) \\
    &|& \lambda x~\pi \quad|\quad (\pi~t) &
    ({\forall}\text{-intro},~{\forall}\text{-elim}) \\
    &|& \langle t,\pi\rangle \quad|\quad
    (\exelim~\pi~x \alpha \pi') &
    ({\exists}\text{-intro},~{\exists}\text{-elim}) \\
  \end{array}$$
\end{definition}
Each proof-term constructor corresponds to an inference rule of
intuitionistic natural deduction (see above).
A proof-term built using a constructor that corresponds to an
introduction rule---that is, a proof-term of the form
$\lambda\alpha~\pi$, $\langle\pi,\pi'\rangle$, $i(\pi)$, $j(\pi)$,
$I$, $\lambda x~\pi$ or $\langle t,\pi\rangle$---is called an
\emph{introduction}.
Similarly, a proof-term built using a constructor that corresponds to
an elimination rule---that is, a proof-term of the form
$(\pi~\pi')$, $\fst(\pi)$, $\snd(\pi)$,
$(\delta~\pi_1~\alpha\pi_2~\beta\pi_3)$,
$(\botelim~\pi)$, $(\pi~t)$ or $(\exelim~\pi~x\alpha\pi')$---is called
an \emph{elimination}.

\begin{definition}[Reduction]
  \emph{Reduction} on proof-terms is defined by the following rules
  that eliminate cuts step by step.
  $$\begin{array}{r@{\quad}c@{\quad}l}
    (\lambda \alpha~\pi_{1}~\pi_{2}) &\triangleright&
    \pi_{1}(\alpha\la\pi_{2}) \\
    \fst(\langle\pi_1,\pi_2\rangle) &\triangleright& \pi_1 \\
    \snd(\langle\pi_1,\pi_2\rangle) &\triangleright& \pi_2 \\
    (\delta~i(\pi_{1})~\alpha \pi_{2}~\beta \pi_{3}) 
    &\triangleright& \pi_{2}(\alpha \la \pi_1) \\
    (\delta~j(\pi_{1})~\alpha \pi_{2}~\beta \pi_{3}) 
    &\triangleright& \pi_{3}(\beta \la \pi_1) \\
    (\lambda x~\pi~t) &\triangleright& \pi(x \la t) \\
    (\exelim~\langle t,\pi_{1} \rangle~\alpha x\pi_{2}) &\triangleright&
    \pi_{2}(x \la t,\alpha \la \pi_1) \\
  \end{array}$$
\end{definition}

\begin{definition}[Reducibility candidates]
  A set $R$ of proof-terms is a \emph{reducibility candidate} if:
  \begin{itemize}
  \item if $\pi \in R$, then $\pi$ is strongly normalizable;
  \item if $\alpha$ is a variable, then $\alpha\in R$;
  \item if $\pi\in R$ and $\pi\triangleright\pi'$, then $\pi'\in R$;
  \item if $\pi$ is an elimination
    and if $\pi' \in R$ for all $\pi'$ such that
    $\pi\triangleright^{1}\pi'$,
    then  $\pi \in R$. 
  \end{itemize}
\end{definition}
The set of reducibility candidates is written~$\C$.

\begin{definition}[Pre-model]
  A \emph{pre-model} $\M$ (of a given signature) is given by
  \begin{itemize}
  \item A nonempty set (still) written~$\M$ and called the \emph{domain}
    of~$\M$ (or several domains $\M_{\sigma}$ for many-sorted
    theories);
  \item For each function symbol~$f$ of arity~$n$, a function
    $f^{\M}:\M^n\to\M$;
  \item For each predicate symbol~$p$ of arity~$n$, a function
    $p^{\M}:\M^n\to\C$.
  \end{itemize}
\end{definition}

If $P$ is an atomic formula and $\phi$ is a assignment on a
premodel~$\M$, the reducibility candidate $\Int{P}_{\phi}$ is defined
in the obvious way.
This definition extends to all formul{\ae} as follows:

\begin{itemize}
\item A proof-term is an element of $\Int{A\limp B}_{\phi}$ if it
  is strongly normalizable and when it reduces to a proof-term of the
  form $\lambda\alpha~\pi_1$, then for every $\pi'$ in
  $\Int{A}_{\phi}$, $[\pi'/\alpha]\pi_1$ is an element of
  $\Int{B}_{\phi}$.
\item A proof-term is an element of $\Int{A\land B}_{\phi}$ if it is
  strongly normalizable and when it reduces to a proof-term of the
  form $\langle\pi_1,\pi_2\rangle$, then~$\pi_1$ and~$\pi_2$ are
  elements of $\Int{A}_{\phi}$ and $\Int{B}_{\phi}$.
\item A proof-term is an element of $\Int{A\lor B}_{\phi}$ if it is
  strongly normalizable and when it reduces to a proof-term of the
  form $i(\pi_1)$ (resp.\ $j(\pi_2)$) then $\pi_1$ (resp.\ $\pi_2$)
  is an element of $\Int{A}_{\phi}$ (resp. $\Int{B}_{\phi}$). 
\item A proof-term is an element of $\Int{\top}_{\phi}$ if it is
  strongly normalizable.
\item A proof-term is an element of $\Int{\bot}_{\phi}$ if it is
  strongly normalizable.
\item A proof-term is an element of $\Int{\forall x~A}_{\phi}$ if
  it is strongly normalizable and when it reduces to a proof-term of
  the form $\lambda x~\pi_1$ then for every term~$t$ and every
  element~$v$ of $\M$,
  $[t/x]\pi_1$ is an element of
  $\Int{A}_{\phi+\langle x,v\rangle}$,
  where $\phi+\langle x,v\rangle$ is the function
  that coincides with $\phi$ everywhere except on $x$ where it takes
  value~$v$.
\item A proof-term is an element of $\Int{\exists x~A}_{\phi}$ if it
  is strongly normalizable and when it reduces to a proof-term of the
  form $\langle t,\pi_1\rangle$ there exists an element~$v$ of~$\M$
  such that $\pi_1$ is an element of
  $\Int{A}_{\phi+\langle x,v\rangle}$.
\end{itemize}

A pre-model~$\M$ is a pre-model of a congruence $\equiv$ if
for all formul{\ae}~$A$ and~$B$ such that $A \equiv B$, we have
$\Int{A}_{\phi}=\Int{B}_{\phi}$ for all assignments~$\phi$.

The theorem that if a theory has a pre-model then it has the
normalization property is proved in two steps.
We first check that
(1) for each formula~$A$ and assignment~$\phi$, the set
$\Int{A}_{\phi}$ is a reducibility candidate and then that
(2) proofs of $A$ are all members of the set $\Int{A}_{\phi}$.
The proof of (1) is an induction over the structure of formul{\ae} and
the proof of (2) an induction over the structure of proofs.
We need to prove, for each deduction rule a lemma such as:
\begin{quote}\it
  If $\pi_1\in\Int{A\limp B}_{\phi}$ and $\pi_2\in\Int{A}_{\phi}$,
  then $(\pi_1~\pi_2)\in\Int{B}_{\phi}$.
\end{quote}
Then, as all elements of a reducibility candidate strongly normalize
we conclude that all proofs of a formula~$A$ strongly
normalize~\cite{DowekWerner}.

\subsection{A theory of syntactic constructions}
\label{ss:SyntConstr}

To relativize the pre-model construction of~$\T$ w.r.t.\ a
theory~$\U$, we need to express all the syntactic constructions of the
proof-language of~$\T$ in~$\U$.
The proof-language of~$\T$ is complex: it contains proof-variables,
proof-terms, as well as the terms of the theory~$\T$ (that appear in
proof-terms).
Moreover, we need to express usual syntactic operations, such
as $\alpha$-conversion, substitution, etc.

For that let us consider a \emph{language of trees}~$\L$ generated by
a finite number of constructors, that is,
an algebra of closed terms generated by a finite number of function
symbols written $c_1,\ldots,c_N$---the \emph{constructors}
of~$\L$---whose arities are written $k_1,\ldots,k_N$.
In what follows, we assume that
the language~$\L$ provides two constructors~$0$ (of arity~$0$) and~$s$
(of arity~$1$) to encode natural numbers.

It is well-known that the latter assumption is sufficient to ensure
that all syntactic constructions (such as variables, terms,
proof-terms, etc.) can be encoded in~$\L$, by the mean of G{\"o}del
numberings.
However, the cost of these numberings can be avoided by taking
a richer language~$\L$, in which syntactic constructions can be
encoding more directly.

Once the language $\L$ has been fixed, the class of all functions that
can be defined by primitive recursion on~$\L$ is well-defined too.

From the language~$\L$, we build a (mono-sorted) first-order
theory~$\S$, which we call the
\emph{theory of syntactic constructions}.
This theory is defined as follows:
\begin{enumerate}
\item
  The function symbols of~$\S$ are the constructors $c_1,\ldots,c_N$
  plus, for each primitive recursive definition of a $n$ary function
  on syntactic trees, a function symbol $f$ of arity~$n$.
\item
  The only predicate symbol of the theory~$\S$ is equality.
\item
  The axioms of~$\S$ are:
  \begin{itemize}
  \item The equality axioms, that is:
    reflexivity, symmetry and transitivity of
    equality, as well as congruence axioms for all function
    symbols~$f$:
    $$\begin{array}{l}
      x=x \\
      x=y\land x=z\limp y=z \\
      x_1=y_1\land\cdots\land x_n=y_n
      ~~\limp~~ f(x_1,\ldots,x_n)=f(y_1,\ldots,y_n) \\
    \end{array}$$
  \item Axioms expressing injectivity and non-confusion for
    constructors:
    $$\begin{array}{r@{~~}c@{~~}l}
      c_i(x_1,\ldots,x_{k_i})=c_i(y_1,\ldots,y_{k_i})
      &\limp& x_1=y_1\land\cdots\land x_{k_i}=y_{k_i} \\
      c_i(x_1,\ldots,x_{k_i})=c_j(y_1,\ldots,y_{k_j})
      &\limp& \bot
    \end{array}$$
    (for all $i,j\in[1..N]$ such that $i\neq j$).
  \item For each primitive recursive definition of a function
    represented by a function symbol~$f$,
    the axioms expressing its equational theory.
  \item For each formula $A(x)$ possibly depending on a variable~$x$
    (as well as other parameters that are left implicit), the
    induction principle:
    $$\bigwedge_{i=1}^N
    \Bigl(\forall x_1~\cdots~\forall x_{k_i}~
    A(c_i(x_1,\ldots,x_{k_i}))\Bigr)
    ~~\limp~~\forall x~A(x)\,.$$
  \end{itemize}
\end{enumerate}


In what follows, we assume that the usual syntactic structures of~$\T$
are represented in the theory~$\S$ using the following predicates:
$$\begin{tabular}{l@{\qquad}l}
  $\Nat(x)$ & $x$ is a natural number \\
  $\Le(x,y)$ & $x$ is less than or equal to~$y$ \\
  $\Sort(x)$ & $x$ is a sort \\
  $\TermVar(x,y)$ & $x$ is a term variable of sort~$y$ \\
  $\Term(x,y)$ & $x$ is a term of sort~$y$ \\
  $\ProofVar(x)$ & $x$ is a proof variable \\
  $\Proof(x)$ & $x$ is a proof \\
  $\Elim(x)$ & $x$ is a proof that ends with an elimination rule \\
  $\Red(x,y)$ & the proof $x$ reduces in one step to the proof $y$ \\
  $\Redn(x,n,y)$ & the proof $x$ reduces in $n$ steps to the
  proof $y$ \\
\end{tabular}$$
Note that all the relations above are primitive recursive.

From these relations, we can define important (and non primitive
recursive) relations such as:
$$\begin{array}{r@{~~}c@{~~}l}
  \Red^*(x,y) &\equiv& \exists n~(\Nat(n)\land\Redn(x,n,y)) \\[6pt]
  \SN(x) &\equiv& \Proof(x)~\land\\
  && \exists n~\bigl(\Nat(n)~\land~
  \forall y~(\Proof(y)\limp\lnot\Redn(x,n,y))\bigr) \\
\end{array}$$
(Intuition: a proof~$\pi$ is strongly normalizable if there is a
number $n\ge 0$ such that $\pi$ has no $n$-reduct.)

We assume that proof-terms (Def.~\ref{def:ProofTerm}) are represented
as trees of~$\L$ by the mean of constructors---or constructor
aggregates---$\Axiom$ (arity~1),
$\ImpI$ (arity~2), $\ImpE$ (arity~2),
$\AndI$ (arity~2), $\AndEa$ (arity~1), $\AndEb$ (arity~1),
$\OrIa$ (arity~1), $\OrIb$ (arity~1), $\OrE$ (arity~5),
$\TopI$ (arity~0), $\BotE$ (arity~1),
$\ForallI$ (arity~2), $\ForallE$ (arity~2),
$\ExistsI$ (arity~2), $\ExistsE$ (arity~2)
whose name are self-explanatory
(see Def.~\ref{def:ProofTerm} for the correspondence).

Finally, we write $\TSubst$ (resp.\ $\PSubst$) the ternary function
symbol of~$\S$ that computes term-substitution
(resp.\ proof-substitution) inside a proof.

\section{An abstract normalization result}

We now want to relativize the pre-model construction. Hence, besides
the theory~$\T$ we want to prove the normalization of, we shall
consider another theory~$\U$ either in predicate logic or in deduction
modulo. 

\begin{definition} --- We say that a theory~$\U$ \emph{expresses
  syntactic constructions} if it comes with an interpretation
  (in the sense of Def.~\ref{d:Interp}) of the theory~$\S$ (defined
  in~\ref{ss:SyntConstr}) into~$\U$, which is structural
  on terms and formul{\ae} (cf section~\ref{s:AbstrConsistency}).
\end{definition}

From now on, we assume that~$\U$ expresses all syntactic
constructions, by the mean of an interpretation of the theory~$\S$
that we write using `$\Code{~}$' brackets (`the code of \dots').
In particular, we write $\Code{t}$ the code of any tree~$t\in\L$,
including (representations of) variables, terms and proof-terms of
the theory~$\T$.
The sort of~$\U$ associated to the unique sort of~$\S$ through the
interpretation of~$\S$ in~$\U$ is written $\Edoc{\L}$, and the
corresponding relativisation predicate is written $\Code{\L}(x)$.

Each primitive recursive relation~$R$ of arity~$n$ on~$\L$ is
expressed in~$\S$ as a relation still written
$R(x_1,\ldots,x_n)$ and defined by
$$R(x_1,\ldots,x_n)\quad\equiv\quad
f(x_1,\ldots,x_n)=1\,,$$
where $f$ is the function symbol of~$\S$ associated to the
characteristic function of~$R$.
Via the interpretation~$\Code{~}$, the primitive recursive
relation~$R$ is thus represented in~$\U$ as a relation written
$\Code{R}(x_1,\ldots,x_n)$.
It is important to notice that this representation is faithful when
the theory~$\U$ is consistent:
\begin{proposition} --- If $\U$ is consistent, then
  for all primitive recursive relations $R(t_1,\ldots,t_n)$ on~$\L$
  and for all $t_1,\ldots,t_n\in\L$ we have:
  \begin{center}
    $R(t_1,\ldots,t_n)$\qquad iff\qquad
    $\U\vdash\Code{R}(\Code{t_1},\ldots,\Code{t_n})$
  \end{center}
  In particular, we have:\quad
  $t_1=t_2$\quad iff\quad $\U\vdash\Code{t_1=t_2}$\quad
  (for all $t_1,t_2\in\L$).
\end{proposition}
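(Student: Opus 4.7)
The plan is to reduce the proposition to Proposition~\ref{interptheo} by first establishing the corresponding representability fact inside~$\S$ and then transporting it to~$\U$ through the interpretation~$\Code{~}$. Since $R(x_1,\ldots,x_n)$ is by definition $f(x_1,\ldots,x_n)=1$ for the characteristic function~$f$ of~$R$, the core of the argument is about evaluating closed ground terms of~$\L$ in~$\S$.

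The first step is a sublemma: for every primitive recursive function symbol~$f$ of~$\S$ and every closed ground terms $t_1,\ldots,t_n\in\L$, if the concrete value of $f(t_1,\ldots,t_n)$ in~$\L$ is some tree~$v$, then $\S\vdash f(t_1,\ldots,t_n)=v$. This is proved by induction on the depth of the computation, applying at each step one of the defining equations of the primitive recursive definition of~$f$ (which are axioms of~$\S$) and invoking congruence of equality. As a corollary, for any primitive recursive relation~$R$ and any closed $t_1,\ldots,t_n\in\L$, either $\S\vdash R(t_1,\ldots,t_n)$ (when $R$ holds) or $\S\vdash\lnot R(t_1,\ldots,t_n)$ (when it fails): in the first case $f(t_1,\ldots,t_n)$ evaluates to~$1$, in the second case it evaluates to some tree~$v\neq 1$, and the non-confusion/injectivity axioms for constructors provably give $v\neq 1$ in~$\S$, whence $f(t_1,\ldots,t_n)\neq 1$ by congruence.

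Once the sublemma is in place, both directions follow easily. For the forward direction, if $R(t_1,\ldots,t_n)$ holds then $\S\vdash R(t_1,\ldots,t_n)$, and since this formula is closed, Proposition~\ref{interptheo} applied to the interpretation of~$\S$ in~$\U$ yields $\U\vdash\Code{R}(\Code{t_1},\ldots,\Code{t_n})$ (the relativization guard is vacuous). For the backward direction, assume $\U\vdash\Code{R}(\Code{t_1},\ldots,\Code{t_n})$ and suppose, for contradiction, that $R(t_1,\ldots,t_n)$ fails; then $\S\vdash\lnot R(t_1,\ldots,t_n)$ by the sublemma, and transporting this via Proposition~\ref{interptheo} gives $\U\vdash\lnot\Code{R}(\Code{t_1},\ldots,\Code{t_n})$, whence $\U\vdash\bot$, contradicting consistency of~$\U$. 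The particular case for equality is immediate, since equality of trees of~$\L$ is itself primitive recursive: non-confusion together with injectivity of constructors makes equality decidable on closed terms.

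The only delicate point is the sublemma: one must check carefully that the axioms of~$\S$ describing the primitive recursive definition of~$f$ really do suffice to simulate, inside~$\S$, the full evaluation of $f(t_1,\ldots,t_n)$ down to its normal form~$v$, and that the non-confusion axioms correctly separate distinct closed constructor terms. Both are routine inductions on the structure of the recursive definition and on~$v$ respectively, but they rely essentially on the freeness of the algebra~$\L$ and on the full induction schema included in~$\S$.
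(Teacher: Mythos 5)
Your proposal is correct and follows essentially the same route as the paper: the paper's ``from the trace of the computation of the characteristic function we build a proof of $\S\vdash f(t_1,\ldots,t_n)=1$ (resp.\ $=0$)'' is exactly your sublemma, and the transport to~$\U$ plus the consistency argument for the converse direction match the paper's case analysis. You merely make explicit what the paper leaves implicit, namely the induction on the computation and the appeal to the non-confusion axioms to separate the value from~$1$.
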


\begin{proof}
  (\textit{Direct implication})\quad
  Assume that $R(t_1,\ldots,t_n)$ holds.
  From the trace of the computation of the characteristic
  function~$f$ of~$R$ applied to $t_1,\ldots,t_n$ we easily build
  a proof of $\S\vdash f(t_1,\ldots,t_n)=1$, from which we immediately
  get a proof of $\U\vdash\Code{R}(\Code{t_1},\ldots,\Code{t_n})$
  through the interpretation of~$\S$ in~$\U$.\smallbreak\noindent
  (\textit{Converse implication})\quad
  Assume that $\U\vdash\Code{R}(\Code{t_1},\ldots,\Code{t_n})$.
  We distinguish two cases,
  depending on whether $R(t_1,\ldots,t_n)$ holds or not.
  \begin{itemize}
  \item Either $R(t_1,\ldots,t_n)$ holds.\quad
    In this case we are done.
  \item Either $R(t_1,\ldots,t_n)$ does not hold.\quad
    From the trace of the computation of the characteristic
    function of~$R$, we now get a proof of
    $\S\vdash f(t_1,\ldots,t_n)=0$, and thus a proof of
    $\U\vdash\lnot\Code{R}(\Code{t_1},\ldots,\Code{t_n})$,
    which is impossible since $\U$ is consistent.
    Hence this case is absurd.\qed
  \end{itemize}
\end{proof}

In what follows, we will need a stronger notion of consistency,
namely:
\begin{definition}[$1$-consistency]\label{d:1Cons}
  --- We say that~$\U$ is \emph{$1$-consistent} if for all
  primitive recursive relations $R(t_1,\ldots,t_n)$ on~$\L$,
  the derivability of 
  $$\U\vdash\exists x_1\cdots\exists x_n~
  \bigl(\Code{\L}(x_1)\land\cdots\land\Code{\L}(x_n)\land
  \Code{R}(x_1,\ldots,x_n)\bigr)$$
  entails the existence of trees $t_1,\ldots,t_n\in\L$ such that
  $R(t_1,\ldots,t_n)$.
\end{definition}

The $1$-consistency of a theory entails its consistency, but the
converse does not hold in general.

\begin{definition}[Realizability translation]
  --- A \emph{realizability translation} of a theory~$\T$ in~$\U$ is
  defined by
  \begin{itemize}
  \item a function which maps each sort~$s$ of~$\T$ to a sort~$s_*$
    of~$\U$ with a relativization predicate~$s^*(x)$ in~$\U$ (with~$x$
    of sort~$s_*$);
  \item a function which maps each variable~$x$ of sort~$s$ in~$\T$ to
    a variable~$x^*$ of sort~$s_*$ in~$\U$;
  \item a function which maps each function symbol $f$ of rank
    $\<s_1,\ldots,s_n,s\>$ in~$\T$ to a term $f^*(z_1,\ldots,z_n)$
    of sort $s_*$ in~$\U$, that (possibly) depends on variables
    $z_1,\ldots,z_n$ of sort $s_{1*},\ldots, s_{n*}$, respectively.
  \item a function which maps each predicate symbol $p$ of rank
    $\<s_1,\ldots,s_n\>$~in $\T$ to a formula of~$\U$ written
    $\pi\real p(z_1,\ldots,z_n)$, that possibly depends on variables
    $\pi,z_1,\ldots,z_n$ of sorts $\Edoc{\L},s_{1*},\ldots,s_{n*}$,
    respectively.
  \end{itemize}
\end{definition}

Given a realizability translation of~$\T$ in~$\U$, the translation of
variables and function symbols is extended to all terms by setting:
$$\begin{array}{l@{~~}c@{~~}l}
  (x)^* &=& x^* \\
  (f(t_1,\ldots,t_n))^* &\equiv&
  f^*(z_1,\ldots,z_n)(z_1\la t^*_1,\ldots,z_n\la t^*_n)
\end{array}$$
Notice that this translation transforms each term~$t$ of~$\T$ with
free variables $x_1,\ldots,x_n$ of sorts $s_1,\ldots,s_n$ into a term
$t^*$ of~$\U$ whose free variables occur among the variables
$x^*_1,\ldots,x^*_n$ of sorts $s_{*1},\ldots,s_{*n}$.

Similarly, we extend the translation of predicate symbols to all
formul{\ae} by associating to
each formula~$A$ of~$\T$ with free variables $x_1,\ldots,x_n$ of sorts
$s_1,\ldots,s_n$
a formula $\pi\real A$ of~$\U$ with free variables
$\pi,x^*_1,\ldots,x^*_n$ of sorts $\Edoc{\L},s_{*1},\ldots,s_{*n}$.
The formula $\pi\real A$ is recursively defined on~$A$ by the
equations:
$$\begin{array}{l@{~}c@{~}l}
  \pi\real p(t_1,\ldots,t_n) &\equiv&
  (\pi\real p(z_1,\ldots,z_n))(z_1\la t^*_1,\ldots,z_n\la t^*_n)
  \\[6pt]
  \pi\real\top &\equiv& \Code{\SN}(\pi) \\[6pt]
  \pi\real\bot &\equiv& \Code{\SN}(\pi) \\[6pt]
  \pi\real A\limp B &\equiv&
  \Code{\SN}(\pi)~~\land\\
  && \forall\alpha~\forall\pi'~(
  \Code{\Red^*}(\pi,\Code{\ImpI}(\alpha,\pi')) ~\limp \\
  &&\hphantom{\forall\alpha~\forall \pi'~(}
  \forall\phi~(\phi\real A ~\limp~
  \Code{\PSubst}(\pi',\alpha,\phi)\real B)) \\[6pt]
  \pi\real A\land B &\equiv&
  \Code{\SN}(\pi)~~\land\\
  && \forall\pi_1\,\forall\pi_2~(
  \Code{\Red^*}(p,\Code{\AndI}(\pi_1,\pi_2)) ~\limp~
  \pi_1\real A~{\land}~\pi_2\real B) \\[6pt]
  \pi\real A\lor B &\equiv&
  \Code{\SN}(\pi)~~\land\\
  &&\forall\pi_1~(\Code{\Red^*}(\pi,\Code{\OrIa}(\pi_1))
  ~\limp~\pi_1\real A)~\land\\
  &&\forall\pi_2~(\Code{\Red^*}(\pi,\Code{\OrIb}(\pi_2))
  ~\limp~\pi_2\real B) \\[6pt]
  \pi\real\forall x~A(x) &\equiv&
  \Code{\SN}(\pi)~~\land\\
  && \forall v~\forall\pi'~(
  \Code{\Red^*}(\pi,\Code{\ForallI}(v,\pi')) ~\limp\\
  &&\hphantom{\forall v~\forall\pi'~(}
  \forall x^*~\forall t~(s^*(x^*)\land\Code{\Term}(t,\Code{s})~\limp\\
  &&\hphantom{\forall v~\forall\pi'~(\forall x^*~\forall t~(}
  \Code{\TSubst}(\pi',v,t)\real A(x))) \\[6pt]
  \pi\real\exists x~A(x) &\equiv&
  \Code{\SN}(\pi)~~\land\\
  && \forall\pi'~\forall t~(
  \Code{\Red^*}(\pi,\Code{\ExistsI}(t,\pi')) ~\limp\\
  &&\hphantom{\forall\pi'~\forall t~(}
  \exists x^*~(s^*(x^*)\land\pi'\real A(x))) \\[6pt]
\end{array}$$

We now need to express in the theory~$\U$ that the `set' (in its
informal sense) of all proofs~$\pi$ such that $\pi\Vdash A$ forms a
reducibility candidate.
Given a formula of~$\U$ possibly depending on a variable~$\pi$ of
sort~$\Edoc{\L}$ we write
$$\begin{array}{r@{\quad}c@{\quad}l@{\quad}l}
  \CR_{\pi}(A(\pi)) &\equiv&
  \forall\pi~\bigl(A(\pi)~\limp~
  \Code{\Proof}(\pi)\land\Code{\SN}(\pi)\bigr) &\land \\
  && \forall\alpha~\bigl(\Code{\ProofVar}(\alpha)~\limp~
  A(\Code{\Axiom}(\alpha))\bigr) &\land\\
  && \forall\pi~\forall\pi'~\bigl(A(\pi)\land\Code{\Red}(\pi,\pi')
  ~\limp~A(\pi')\bigr) &\land \\
  && \forall\pi~\bigl(\Code{\Elim}(\pi)\land
  \forall\pi'(\Code{\Red}(\pi,\pi')\limp A(\pi'))
  ~\limp~A(\pi)\bigr) \\
\end{array}$$

We can now introduce the conditions which make that a realizability
translation is a realizability interpretation:
\begin{definition}[Realizability interpretation]
  --- A realizability translation of~$\T$ in~$\U$ is a
  \emph{realizability interpretation} if the following conditions
  hold:
  \begin{enumerate}
  \item For each sort~$s$ of~$\T$, we have
    $\U\vdash\exists x~s^*(x)$;
  \item For each function symbol $f$ of rank $\<s_1,\ldots,s_n,s\>$
    in~$\T$:
    $$\U\vdash\forall z_1\cdots\forall z_n~
    \bigl(s^*_1(z_1)\land\cdots\land s^*_n(z_n)~\limp~
    s^*(f^*(z_1,\ldots,z_n))\bigr)$$
  \item For each predicate symbol~$p$ of rank $\<s_1,\ldots,s_n\>$
    in~$\T$:
    $$\U\vdash\forall z_1\cdots\forall z_n~
    \bigl(s^*_1(z_1)\land\cdots\land s^*_n(z_n)~\limp~
    \CR_{\pi}(\pi\real p(z_1,\ldots,z_n))\bigr)$$
  \item For all pairs of congruent formul{\ae} $A\equiv A'$ with free
    variables $x_1,\ldots,x_n$ of sorts
    $s_1,\ldots,s_n$:
    $$\U\vdash\forall x^*_1\cdots\forall x^*_n~
    \bigl(s^*_1(x^*_1)\land\cdots\land s^*_n(x^*_n)~\limp~
    \forall\pi(\pi\real A~\liff~\pi\real A')\bigr)$$
\COUIC{
  \item For all axioms~$A$ of~$\T$, there is a proof term $\pi$ such
    that $\U\vdash(\Code{\pi}\real A)$.
}
  \end{enumerate}
\end{definition}

Items~2 and~3 immediately extend to all terms and formul{\ae} as
follows:
\begin{proposition}[Typing]
  --- Given a realizability translation of~$\T$ in~$\U$:
  \begin{itemize}
  \item For all terms~$t$ of sort~$s$ in~$\T$ with free variables
    $x_1,\ldots,x_n$ of sorts $s_1,\ldots,s_n$:
    $$\U\vdash\forall x^*_1\cdots\forall x^*_n~
    \bigl(s^*_1(x^*_1)\land\cdots\land s^*_n(x^*_n)~\limp~
    s^*(t)\bigr)$$
  \item For all formul{\ae}~$A$ of~$\T$ with free variables
    $x_1,\ldots,x_n$ of sorts $s_1,\ldots,s_n$:
    $$\U\vdash\forall x^*_1\cdots\forall x^*_n~
    \bigl(s^*_1(x^*_1)\land\cdots\land s^*_n(x^*_n)~\limp~
    \CR_{\pi}(\pi\real A)\bigr)$$
  \end{itemize}
\end{proposition}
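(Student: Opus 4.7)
The plan is to prove the two items in sequence: the term statement by induction on the structure of~$t$, and the formula statement by induction on the structure of~$A$ (with the atomic case of the latter invoking the former). All reasoning is carried out inside~$\U$, under the relativisation hypotheses $s^*_1(x^*_1)\land\cdots\land s^*_n(x^*_n)$.

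For the term statement: The variable case reduces directly to the $i$-th hypothesis, since $(x_i)^*=x^*_i$. For a composite term $t=f(t_1,\ldots,t_m)$ of rank $\<s'_1,\ldots,s'_m,s\>$, the inductive hypothesis applied to each~$t_j$ gives $s'^*_j(t^*_j)$ in~$\U$. Item~2 of the definition of realizability interpretation, instantiated at $z_j:=t^*_j$, then yields $s^*(f^*(t^*_1,\ldots,t^*_m))$, which by the extension rule for the translation of terms is precisely~$s^*(t^*)$.

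For the formula statement: The atomic case $A=p(t_1,\ldots,t_m)$ combines the term typing just established (applied to each argument) with item~3 of the definition, after the substitution $z_j\la t^*_j$. For each logical case, we must verify the four closure conditions packed into $\CR_{\pi}(\pi\real A)$. The $\Code{\SN}(\pi)$ clause is explicit in each defining equation of~$\pi\real A$. Closure under the axiom constructor is obtained by observing that $\Code{\Axiom}(\alpha)$ is SN (it has no reducts) and cannot match any introduction constructor by the non-confusion axioms inherited from~$\S$, so the remaining conjunct of $\Code{\Axiom}(\alpha)\real A$ is vacuously satisfied. Closure under one-step reduction follows because $\Code{\Red^*}(\pi',\Code{c}(\ldots))$ implies $\Code{\Red^*}(\pi,\Code{c}(\ldots))$ whenever $\Code{\Red}(\pi,\pi')$, so the universal condition on reducts of~$\pi$ transfers to~$\pi'$. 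Closure under the elimination clause uses the fact that an elimination cannot itself be of the shape of an introduction constructor (again by non-confusion): therefore any $\Code{\Red^*}(\pi,\Code{c}(\ldots))$ from an elimination~$\pi$ factors through some one-step reduct~$\pi'$, to which the hypothesis $A(\pi')$ applies; strong normalisation of~$\pi$ follows because every one-step reduct is SN.

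The main technical obstacle will be the quantifier cases $\forall x\,A(x)$ and $\exists x\,A(x)$: one must invoke the inductive hypothesis on~$A(x)$ in the \emph{extended} context carrying the additional relativisation~$s^*(x^*)$ on the fresh bound variable, and then unfold the defining equation of $\pi\real\forall x\,A(x)$ (resp.\ $\pi\real\exists x\,A(x)$) carefully so that each of the four $\CR$-closure conditions on~$\pi\real A(x)$ propagates to the quantified formula. Once the free variable bookkeeping is handled, the four cases themselves are completely analogous to the propositional ones, and rest entirely on the interpretation of basic syntactic lemmas from~$\S$—injectivity and non-confusion of the proof-term constructors, and preservation of SN under one-step reduction—that~$\U$ inherits by assumption.
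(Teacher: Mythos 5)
Your proposal is correct and follows exactly the route the paper intends: the paper omits the proof entirely, justifying the proposition only by the preceding remark that items~2 and~3 of the definition of realizability interpretation ``immediately extend'' to all terms and formul{\ae}, and your double induction (on terms, then on formul{\ae}, with the atomic case combining term typing with item~3 and the compound cases verifying the four $\CR_{\pi}$ clauses via non-confusion of constructors and the behaviour of $\Code{\Red^*}$) is the natural way to fill that in. The only point deserving a little more care than you give it is the elimination clause, where concluding $\Code{\SN}(\pi)$ from the strong normalizability of all one-step reducts requires, with the paper's $\exists n$-style definition of $\SN$, extracting a uniform bound from the finitely many reducts---a lemma about the primitive recursive relations of~$\S$ that the paper likewise leaves implicit.
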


\begin{corollary}[Normalization of realizers]\label{cor:NormReal}
  --- For all formul{\ae}~$A$ of~$\T$ with free variables
  $x_1,\ldots,x_n$ of sorts $s_1,\ldots,s_n$:
  $$\U\vdash\forall x^*_1\cdots\forall x^*_n~\forall\pi~
  \bigl(s^*_1(x^*_1)\land\cdots\land s^*_n(x^*_n)\land
  \pi\real A~\limp~\Code{\SN}(\pi)\bigr)$$
\end{corollary}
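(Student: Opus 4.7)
The plan is to derive the corollary immediately from the Typing proposition that precedes it. Given free variables $x_1,\ldots,x_n$ of sorts $s_1,\ldots,s_n$, the second bullet of the Typing proposition gives us, inside $\U$,
$$s^*_1(x^*_1)\land\cdots\land s^*_n(x^*_n)~\limp~\CR_{\pi}(\pi\real A)\,.$$
So I would work in $\U$: assume $s^*_1(x^*_1),\ldots,s^*_n(x^*_n)$, apply the Typing proposition to obtain $\CR_{\pi}(\pi\real A)$, then unfold the definition of $\CR$ and take its first conjunct, which is exactly
$$\forall\pi~\bigl(\pi\real A~\limp~\Code{\Proof}(\pi)\land\Code{\SN}(\pi)\bigr)\,.$$
Projecting on the right-hand side of the conjunction inside the scope of this universal quantifier and reshuffling the quantifiers yields the statement of the corollary.

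There is really no obstacle here: the whole content of the corollary is already packaged inside the first clause of the definition of $\CR_\pi(-)$, and the Typing proposition is precisely what delivers that $\pi\real A$ satisfies $\CR_\pi$. The only mild bookkeeping is to make sure the sort hypotheses on $x^*_1,\ldots,x^*_n$ are propagated correctly and that we reorder the $\forall\pi$ so that it sits after the $\forall x^*_i$; both are trivial first-order manipulations.
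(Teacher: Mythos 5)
Your proof is correct and is exactly the route the paper intends (the paper leaves the corollary's proof implicit, but it is stated as an immediate consequence of the second bullet of the Typing proposition, whose conclusion $\CR_{\pi}(\pi\real A)$ contains $\forall\pi\,(\pi\real A\limp\Code{\Proof}(\pi)\land\Code{\SN}(\pi))$ as its first conjunct). The projection onto the second component and the quantifier reshuffling are, as you say, routine.
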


We now extend the realizability relation to sequents as follows:
given a sequent $A_1,\ldots,A_k\vdash B$ of~$\T$ and a variable~$\pi$
of sort $\Edoc{\L}$ in~$\U$, we write
$$\begin{array}{l}
  \pi\real(A_1,\ldots,A_k\vdash B)~~\equiv\\[3pt]
  \quad\Code{\ImpI}(\Code{\alpha_1},\cdots
  \Code{\ImpI}(\Code{\alpha_k},\pi)\cdots)\real
  A_1\limp\cdots\limp A_k\limp B\,, \\
\end{array}$$
where $\alpha_1,\ldots,\alpha_k$ are pairwise distinct proof-variables
(of~$\T$) that represent the assumptions $A_1,\ldots,A_k$.
Corollary~\ref{cor:NormReal} immediately extends to sequent realizers:
\begin{corollary}\label{cor:NormRealSeq}
  --- For all sequents $\Gamma\vdash A$ of~$\T$ with free variables
  $x_1,\ldots,x_n$ of sorts $s_1,\ldots,s_n$:
  $$\U\vdash\forall x^*_1\cdots\forall x^*_n~\forall\pi~
  \bigl(s^*_1(x^*_1)\land\cdots\land s^*_n(x^*_n)\land
  \pi\real(\Gamma\vdash A)~\limp~\Code{\SN}(\pi)\bigr)$$
\end{corollary}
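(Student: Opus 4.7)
The plan is to reduce the statement to Corollary~\ref{cor:NormReal} applied to the iterated implication, and then to peel off the outer $\Code{\ImpI}$ wrappers to recover the strong normalization of $\pi$ itself.

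Let $\Pi$ denote the composite proof-term code $\Code{\ImpI}(\Code{\alpha_1},\cdots\Code{\ImpI}(\Code{\alpha_k},\pi)\cdots)$, so that by definition of sequent realizers,
$$\pi\real(\Gamma\vdash A) ~~\equiv~~
\Pi\real A_1\limp\cdots\limp A_k\limp B.$$
The formula $A_1\limp\cdots\limp A_k\limp B$ has its free variables among $x_1,\ldots,x_n$ (the hypotheses $A_i$ being formul\ae{} of $\T$ over those same variables), so Corollary~\ref{cor:NormReal} applied to this iterated implication immediately yields, in $\U$,
$$s^*_1(x^*_1)\land\cdots\land s^*_n(x^*_n)\land\pi\real(\Gamma\vdash A)
~\limp~\Code{\SN}(\Pi).$$

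To conclude we need the auxiliary $\U$-lemma
$$\U\vdash\forall\alpha~\forall\rho~\bigl(\Code{\ProofVar}(\alpha)\land\Code{\SN}(\Code{\ImpI}(\alpha,\rho))~\limp~\Code{\SN}(\rho)\bigr),$$
to be applied $k$ times on $\Code{\SN}(\Pi)$ in order to strip the $k$ leading $\Code{\ImpI}$-wrappers and obtain $\Code{\SN}(\pi)$. The key observation is that a bare $\Code{\ImpI}$-headed term is never a redex on its own (the reduction rule for $\ImpI$ requires an enclosing application), so any one-step reduct of $\Code{\ImpI}(\alpha,\rho)$ is of the form $\Code{\ImpI}(\alpha,\rho')$ with $\Code{\Red}(\rho,\rho')$. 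Consequently every reduction sequence of length $n$ from $\rho$ lifts to a reduction sequence of length $n$ from $\Code{\ImpI}(\alpha,\rho)$, so any natural number witnessing strong normalization of $\Code{\ImpI}(\alpha,\rho)$ also witnesses strong normalization of $\rho$.

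The main work lies in formalising this auxiliary lemma inside $\U$: one must unfold the definition of $\Code{\SN}$ in terms of $\Code{\Redn}$, and establish by induction on $n$ the primitive recursive fact that $\Code{\Redn}(\rho,n,\rho')$ entails $\Code{\Redn}(\Code{\ImpI}(\alpha,\rho),n,\Code{\ImpI}(\alpha,\rho'))$. Since this is a purely primitive recursive statement about the reduction function, it is provable in $\S$, and hence transferred to $\U$ through the interpretation $\Code{~}$; no further normalization-theoretic input is needed.
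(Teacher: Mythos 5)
Your argument is correct and is the natural way to make precise the paper's one-line claim that Corollary~\ref{cor:NormReal} ``immediately extends'' to sequent realizers: unfold the definition of $\pi\real(\Gamma\vdash A)$, apply Corollary~\ref{cor:NormReal} to the iterated implication $A_1\limp\cdots\limp A_k\limp B$, and then strip the $k$ outer $\Code{\ImpI}$ wrappers. The stripping lemma you isolate---that $\Code{\Redn}(\rho,n,\rho')$ lifts to $\Code{\Redn}(\Code{\ImpI}(\alpha,\rho),n,\Code{\ImpI}(\alpha,\rho'))$, so any witness for $\Code{\SN}(\Code{\ImpI}(\alpha,\rho))$ also witnesses $\Code{\SN}(\rho)$---is exactly the detail the paper leaves implicit, and it is indeed a primitive recursive fact provable in $\S$ by induction and transferred to $\U$ through the interpretation.
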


\begin{proposition}[Existence of a realizer]\label{prop:ExistReal}
  --- Let $(\_)^*$ be a realizability interpretation of~$\T$ in~$\U$.
  If a sequent $A_1,\ldots,A_k\vdash B$ of~$\T$ with free variables
  $x_1,\ldots,x_n$ of sorts $s_1,\ldots,s_n$
  has a proof~$\pi$ in
  intuitionistic deduction modulo~$\T$ with free proof-variables
  $\alpha_1:A_1$, \dots, $\alpha_k:A_k$, then
  $$\begin{array}{rcl}
    \U &\vdash& \forall x^*_1\cdots\forall x^*_n~
    \bigl(s^*_1(x^*_1)\land\cdots\land s^*_k(x^*_n)~\limp~
    \Code{\pi}\real(A_1,\ldots,A_k\vdash B)\bigr)
  \end{array}$$
\end{proposition}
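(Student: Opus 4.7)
The plan is to proceed by induction on the structure of the derivation of the sequent $A_1,\ldots,A_k\vdash B$ in intuitionistic deduction modulo. For each inference rule of natural deduction, I will prove in~$\U$ a corresponding \emph{adequacy lemma} asserting that if the premises of the rule are realized, then so is the conclusion. Since the proof-term~$\Code{\pi}$ is built in~$\U$ by applying the code of the corresponding constructor (e.g.\ $\Code{\ImpI}$, $\Code{\ImpE}$, \dots) to the codes of the subproofs, the induction step for each rule reduces to an in-$\U$ argument about one proof-constructor.

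First I would establish two auxiliary lemmas in~$\U$. (a)~A \emph{substitution lemma} stating that if $t$ is a term of sort $s$ with free variables among $x_1,\ldots,x_n$ and $v_1,\ldots,v_n$ realize the relevant sorts, then the interpretation commutes with term substitution, i.e.\ $(t\{x_i\la u_i\})^*=t^*\{x^*_i\la u^*_i\}$, and similarly that $\pi\real A\{x\la t\}$ is equivalent to $\pi\real A$ with $x^*$ set to $t^*$; and that proof-substitution in the realizer is compatible with the realizability relation. (b)~The \emph{congruence lemma}: for every pair $A\equiv A'$ of congruent formul{\ae}, $\pi\real A\liff\pi\real A'$, which is exactly condition~4 of a realizability interpretation, and which allows one to accommodate the deduction-modulo conversion rule at any point in the derivation.

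Then for each introduction/elimination rule I would verify in~$\U$ the appropriate clause. The introduction rules are essentially by definition of $\real$: e.g.\ for $\ImpI$, if whenever $\phi\real A$ we have $\Code{\PSubst}(\Code{\pi'},\Code{\alpha},\phi)\real B$, then $\Code{\ImpI}(\Code{\alpha},\Code{\pi'})\real A\limp B$ follows once we check strong normalizability of that term, which comes from Corollary~\ref{cor:NormReal} applied to the inductive hypothesis. The elimination rules are the main obstacle and use crucially that $\pi\real C$ is a reducibility candidate (the Typing Proposition): for $\ImpE$, given $\Code{\pi_1}\real A\limp B$ and $\Code{\pi_2}\real A$, one shows $\Code{\ImpE}(\Code{\pi_1},\Code{\pi_2})\real B$ by well-founded induction in~$\U$ on the sum of reduction lengths of $\Code{\pi_1}$ and $\Code{\pi_2}$ (both SN by Corollary~\ref{cor:NormReal}); at each stage the term is either the head-redex $(\lambda\alpha~\pi')~\pi_2$---handled by the definition of $\real{\limp}$ and the substitution lemma---or an elimination whose one-step reducts are all known to realize~$B$, so membership follows by the fourth CR clause. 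The cases for $\AndEa$, $\AndEb$, $\OrE$, $\BotE$, $\ForallE$ and $\ExistsE$ are analogous, using the respective clauses of the definition of $\real$ and the substitution lemma.

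The main obstacle is carrying out these adequacy arguments entirely inside~$\U$, in particular the well-founded induction on reduction lengths for the elimination cases: this requires that~$\U$ has enough induction to formalise ``every proof-term in a given reducibility candidate is SN, and the predicate $\CR_{\pi}(A(\pi))$ is preserved by the eliminators.'' All this is available because $\CR$ was defined in~$\U$ using only the primitive recursive predicates $\Code{\Proof}$, $\Code{\SN}$, $\Code{\Elim}$, $\Code{\Red}$, $\Code{\ProofVar}$ and $\Code{\Axiom}$, and because the induction axiom of~$\S$ (lifted to~$\U$ through the interpretation) supplies the required induction on syntactic trees. Once all adequacy lemmas are in place, a straightforward induction on the derivation of $\alpha_1{:}A_1,\ldots,\alpha_k{:}A_k\vdash B$ yields the statement, with the intermediate realizers being exactly the codes of the sub-proof-terms of~$\pi$.
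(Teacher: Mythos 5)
Your proposal is correct and follows exactly the route the paper takes: the paper's own proof is just the one-line ``by induction on the derivation of $\pi:(A_1,\ldots,A_k\vdash B)$'', and your expansion --- per-rule adequacy lemmas, a substitution lemma, condition~4 for the conversion steps of deduction modulo, and the $\CR$ clauses for the elimination cases --- is the standard way that induction is carried out. Nothing in your sketch conflicts with the paper; you have simply supplied the details the authors left implicit.
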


\begin{proof}
  By induction on the derivation of $\pi:(A_1,\ldots,A_n\vdash B)$.
\end{proof}

\begin{theorem} --- If a theory $\T$ has a realizability
  interpretation in~$\U$ and if $\U$ is $1$-consistent,
  then $\T$ enjoys the strong normalization property.
\end{theorem}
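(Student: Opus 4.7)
The plan is to chain Proposition~\ref{prop:ExistReal} with Corollary~\ref{cor:NormRealSeq} to obtain, for each proof of~$\T$, a $\U$-provable assertion of its strong normalization (in $\U$'s internal sense), and then to extract genuine meta-theoretic strong normalization by invoking the $1$-consistency of~$\U$.

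Concretely, let $\pi$ be a proof of the sequent $A_1,\ldots,A_k\vdash B$ in~$\T$, with free first-order variables $x_1,\ldots,x_n$ of sorts $s_1,\ldots,s_n$. By Proposition~\ref{prop:ExistReal} we have
$$\U\vdash\forall x^*_1\cdots\forall x^*_n~\bigl(s^*_1(x^*_1)\land\cdots\land s^*_n(x^*_n)\limp\Code{\pi}\real(A_1,\ldots,A_k\vdash B)\bigr),$$
and combining with Corollary~\ref{cor:NormRealSeq} instantiated at the realizer $\Code{\pi}$ yields
$$\U\vdash\forall x^*_1\cdots\forall x^*_n~\bigl(s^*_1(x^*_1)\land\cdots\land s^*_n(x^*_n)\limp\Code{\SN}(\Code{\pi})\bigr).$$
Since $\Code{\pi}$ is a closed tree in~$\L$, the formula $\Code{\SN}(\Code{\pi})$ has no free variable of~$\U$; using that each sort has an inhabitant (item~1 of the realizability interpretation), we may instantiate the $x^*_i$'s and derive $\U\vdash\Code{\SN}(\Code{\pi})$.

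Next, I would unfold $\SN$. The predicate $\SN(x)$ is provably equivalent to $\exists n\,Q(x,n)$, where $Q(x,n)\equiv\Proof(x)\land\Nat(n)\land\forall y(\Proof(y)\limp\lnot\Redn(x,n,y))$ is in fact primitive recursive in $(x,n)$: the apparent unbounded quantifier $\forall y$ is removable because the sizes of all $n$-step reducts of~$x$ are bounded by a primitive recursive function of $x$ and~$n$, so the body can be tested by a bounded search using the primitive recursive relation $\Redn$. Hence the closed formula proved in~$\U$ is provably equivalent to $\exists n\,(\Code{\L}(n)\land\Code{Q}(\Code{\pi},n))$, which has exactly the shape required by Definition~\ref{d:1Cons}. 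Applying $1$-consistency to the primitive recursive predicate $n\mapsto Q(\pi,n)$ (with the specific proof $\pi$ baked in as a constant parameter) produces an actual number $m\in\L$ such that $Q(\pi,m)$ holds, meaning that $\pi$ has no $m$-step reduct and is therefore strongly normalizing in the meta-theoretic sense. Since $\pi$ was arbitrary, this gives the strong normalization property for~$\T$.

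The main obstacle is the primitive recursiveness of the matrix $Q$: the definition of $\SN$ in Section~\ref{ss:SyntConstr} displays an unbounded quantifier $\forall y$, so some care is needed to recognise $\Code{\SN}(\Code{\pi})$ as a genuinely $\Sigma_1$ formula with primitive recursive body, the key fact being the primitive recursive bound on the sizes of $n$-step reducts. Once this is in place, the remaining steps---instantiating the sort variables by their inhabitants and invoking Definition~\ref{d:1Cons}---are routine.
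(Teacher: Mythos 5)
Your proposal is correct and follows essentially the same route as the paper: Proposition~\ref{prop:ExistReal}, then Corollary~\ref{cor:NormRealSeq}, then instantiation of the inhabited sorts to get $\U\vdash\Code{\SN}(\Code{\pi})$, then $1$-consistency applied to the matrix of the existential in $\SN$. You are in fact more careful than the paper on the one delicate point---the paper simply asserts that $\forall y\,(\Proof(y)\limp\lnot\Redn(x,n,y))$ is primitive recursive, whereas you justify it by the primitive recursive bound on $n$-step reducts.
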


\begin{proof}
  Assume $\pi$ is a proof of $A_1,\ldots,A_n\vdash B$ (in
  intuitionistic deduction modulo~$\T$).
  From Prop.~\ref{prop:ExistReal} we get
  $$\U\vdash\forall x^*_1\cdots\forall x^*_k~
  \bigl(s^*_1(x^*_1)\land\cdots\land s^*_k(x^*_k)~\limp~
  \Code{\pi}\real(A_1,\ldots,A_n\vdash B)\bigr)\,,$$
  hence
  $$\U\vdash\forall x^*_1\cdots\forall x^*_k~
  \bigl(s^*_1(x^*_1)\land\cdots\land s^*_k(x^*_k)\limp
  \Code{\SN}(\Code{\pi})\bigr)$$
  using Cor.~\ref{cor:NormRealSeq}, and finally
  $$\U\vdash\Code{SN}(\Code{\pi})\,,$$
  from the fact that all domains of interpretation of sorts are
  inhabited.  The latter means that
  $$\U\vdash\Code{\Proof}(\Code{\pi})~\land~
  \exists n~(\Code{\Nat}(n)\land\Code{R}(x,n))\,,$$
  writing
  $$R(x,n)~~\equiv~~
  \forall y~(\Proof(y)\limp\lnot\Redn(x,n,y))$$
  the relation of~$\S$ which expresses that $x$ has no $n$-reduct.
  But since this relation is primitive recursive, and since~$\U$ is
  $1$-consistent, we deduce that there exists a natural
  number~$n$ such that~$\pi$ has no $n$-reduct.
  Which means that~$\pi$ is strongly normalizable.\qed
\end{proof}

\bibliographystyle{plain}
\bibliography{realizmod}

\end{document}